\newcommand{\po}{\left(}
\newcommand{\pf}{\right)}
\newcommand{\co}{\left[}
\newcommand{\cf}{\right]}
\newcommand{\cco}{\llbracket}
\newcommand{\ccf}{\rrbracket}
\newcommand{\R}{\mathbb R} 
\newcommand{\Z}{\mathbb Z} 
\newcommand{\N}{\mathbb N} 
\newcommand{\dd}{\text{d}}
\newcommand{\na}{\nabla}
\newtheorem{thm}{Theorem}
\begin{document}
\preprint{AIP/123-QED}
\title[Velocity jump processes]{Velocity jump processes : an alternative to multi-timestep methods for faster and accurate molecular dynamics simulations}
\author{Pierre Monmarché} 
\email{pierre.monmarche@sorbonne-universite.fr}
\affiliation{Sorbonne Université, Laboratoire Jacques-Louis Lions, UMR 7589 CNRS, and  Laboratoire de Chimie Théorique, UMR 7616 CNRS, F-75005, Paris, France}
\author{Jérémy Weisman}\affiliation{Sorbonne Université, Laboratoire de Chimie Théorique, UMR 7616 CNRS, F-75005, Paris, France }
\author{Louis Lagardère}
\email{louis.lagardere@sorbonne-universite.fr}
\affiliation{Sorbonne Université, Laboratoire de Chimie Théorique, UMR 7616 CNRS, and Institut Parisien de Chimie Physique et Théorique, FR2622 CNRS, F-75005, Paris, France}
\author{Jean-Philip Piquemal}
\email{jean-philip.piquemal@sorbonne-universite.fr}
\affiliation{Sorbonne Université, Laboratoire de Chimie Théorique, UMR 7616 CNRS, and Institut Universitaire de France, F-75005, Paris, France.}

\begin{abstract}
We propose a new route to accelerate molecular dynamics through the use of velocity jump processes allowing for an adaptive time-step specific to each atom-atom pair (2-body) interactions. We start by introducing the formalism of the new velocity jump molecular dynamics, ergodic with respect to the canonical measure. We then introduce the new BOUNCE integrator that allows for long-range forces to be evaluated at random and optimal time-steps, leading to strong savings in direct space. The accuracy and computational performances of a first BOUNCE implementation dedicated to classical (non-polarizable) force fields is tested in the cases of pure direct-space droplet-like simulations and of periodic boundary conditions (PBC) simulations using Smooth Particule Mesh Ewald. An analysis of the capability of BOUNCE to  reproduce several condensed phase properties is provided. Since electrostatics and van der Waals 2-body contributions are evaluated much less often than with standard integrators using a 1fs timestep, up to a 400 \% direct-space acceleration is observed.  Applying the reversible reference system propagator algorithms (RESPA(1)) to reciprocal space (many-body) interactions allows BOUNCE-RESPA(1) to maintain large speedups in PBC while maintaining precision. Overall, we show that replacing the BAOAB integrator by the BOUNCE adaptive framework preserves a similar accuracy and leads to significant computational savings. 

\end{abstract}
\maketitle

\section{Introduction}
Molecular dynamics (MD) is a popular tool that allows the  simulation of complex molecular systems, ranging from materials to biomolecules, mainly used to compute properties by sampling a defined ensemble. In practice, this means that one would like to perform very long trajectories to maximize this sampling. This can be achieved by brute force high performance computing using advanced massively parallel simulation softwares\cite{tinkerhp,AMBERsoftware,GROMACS,NAMD,PLIMPTON19951,DOMDEC,GENESIS,ShawmilliANTON}. However, at the age of exascale computing, any algorithmic enhancements on the statistical physics side would save millions of hours on supercomputers and therefore energy resources. In this connection the development of more efficient MD integrators is an intense field of research. In recent years, lots of mathematical work has been performed, especially in the framework of the Langevin dynamics, and new techniques  emerged such as Leimkuhler's BAOAB \cite{BAOAB,BAOAB2} offering stable, accurate and well understood integration scheme. Of course, to speed up simulations, one would like primarily to be able to use  time-steps as large as possible. In practice, multi-timestep approaches \cite{tuckerman1992reversible}, that are now standard in MD, do offer a substantial acceleration through the use of frequency-driven splittings. However, it comes at a price as resonance effects limit the maximum usable time-step at a given accuracy.\cite{Skeelresonance1,Skeelresonance2}. 
Various alternative strategies have been proposed to enable the use of very large time-steps such as in the Generalized Langevin Equation (GLE)\cite{GLE} or the stochastic isokinetic extended phase-space algorithm \cite{isokin,isokinpol,isokin3}. They are promising but rely either on some empirical fitting\cite{GLE} or have an important impact on the dynamical properties such as the diffusion coefficient\cite{isokin3}, that is an indication of a limitation of the sampling rate. In this last case, the computational gain is reduced by the fact that a longer trajectory is necessary to keep the same quality of sampling. To speed up molecular dynamics without resorting to some fitting and while not affecting too much the dynamics\cite{BerendsenHMR,GLE}, one possibility is to combine well-chosen integrators within a multi-split approach like for BAOAB-RESPA1 to push forward the stability limit \cite{pushing}.\\
This paper proposes to look outside the box and gives an alternative to multi-time-step approaches. To do so, we will explore the possibility to take into account the slowly varying, bounded part of the atomistic potential at play by a velocity jump mechanism, combining classical MD with probabilistic thinning methods and introducing a new integrator:  BOUNCE. The algorithm can somehow be thought of as a multi-time-step integrator where the long-range forces are evaluated at random time-steps. We will first introduce the mathematical framework of the velocity jumps processes, then present the new BOUNCE integrator. Finally, we will evaluate its computational gain and assess its accuracy using classical, non-polarizable force fields, through a set of numerical experiments performed in the framework of a first implementation of BOUNCE within the Tinker--HP software\cite{tinkerhp}, where various properties will be computed and compared to the state-of-the-art integrators.

\section{Method}
\subsection{Building a new molecular dynamics integrators: requirements}
The main goal of MD is to compute expectations with respect to $\mu$ the Boltzmann-Gibbs measure, namely the probability distribution on $\R^{6N}$ with density proportional to $\exp(-\beta H(q,p))$ where $\beta=1/(k_BT)$ is the inverse temperature, $q$ and $p$ are respectively the positions and momenta of $N$ atoms and the Hamiltonian is $H(q,p)=E_{pot}(q) + E_{kin}(p)$. Denoting $M$ the mass matrix of the system, the kinetic energy is $E_{kin}(p) = \frac12 p' M^{-1} p$, where $x'$ denotes the transpose of $x$. The potential energy $E_{pot}$ will be discussed below.

We need to design a (theoretical, at first) trajectory $t\mapsto (q(t),p(t))$ that is ergodic with respect to $\mu$, in the sense that
\begin{equation}\label{EqErgodic}
\frac1t\int_0^t \varphi\po q(s),p(s)\pf \dd s \ \underset{t\rightarrow\infty}\longrightarrow \int \varphi(q,p)\mu(\dd q,\dd p)    
\end{equation}
for all observables $\varphi$ (in some class of functions on $\R^{6N}$), and then to approximate the continuous-time dynamics by a numerical scheme $(q_k,p_k)_{k\in\N}$, so that
\[\frac1K \sum_{k=1}^K \varphi\po q_k,p_k\pf   \ \underset{k\rightarrow\infty}\longrightarrow \int \varphi(q,p)\mu_{\delta}(\dd q,\dd p)\]
where $\mu_\delta$ is an approximation of $\mu$, depending on the time-step $\delta$ of the numerical scheme.

A classical process that is ergodic with respect to $\mu$ is the Langevin diffusion, solution of the stochastic differential equation (SDE)
\[\left\{\begin{array}{rcl}
\dot  q(t) & = & M^{-1}  p(t)\\
\dot p(t) & = &  -\nabla E_{pot}\po q(t)\pf   - \gamma M^{-1} \tilde p(t)  - \sqrt{2\beta^{-1}\gamma }  \dot w(t)\,,
\end{array} \right.\]
where $(w(t))_{t\geqslant 0}$ is a Brownian motion on $\R^{6N}$, so that $\dot w(t)$ is a white noise force, and $\gamma>0$. Rather than entering into technical
details about diffusion processes, SDEs and stochastic calculus in general, let us simply say that this process is the limit as $\delta$ vanishes of the following Euler scheme with timestep $\delta$:
\[\left\{\begin{array}{rcl}
q_{k+1} - q_k & = & \delta M^{-1}  p_k\\
p_{k+1} - p_k & = &  -\delta \nabla E_{pot}\po q_k\pf   - \delta \gamma M^{-1} p_k  + \sqrt{2\beta^{-1}\gamma \delta }  W_k\,,
\end{array} \right.\]
where $(W_k)_{k\in\N}$ is a sequence of independent random variables of dimension $3N$ distributed according to the standard (i.e. mean zero, variance the identity matrix) Gaussian distribution. In fact, rather than this naive Euler scheme, higher order methods are used in practice, as we will see below. Similarly to the Euler scheme, these schemes require at each step the computation for a configuration $q$ of the forces $-\nabla E_{pot}(q)$. The rest of this work is based on a decomposition of the forces of the form
\[\nabla_q E_{pot}(q) \ = \ \sum_{i=0}^K F_i(q)\]
for $K\geqslant 1$ and some vector-fields $F_i$. Suppose that $F_0$ gathers short-range forces, typically cheap to compute (since each particle only interacts with a small number of neighbours through these forces) but fast-varying and high (in molecular dynamics, short-range forces are repulsive and singular at 0, for instance Lennard-Jones forces scale as $1/r^{13}$ where $r$ is the distance between the centres of two atoms), while the $F_i$'s for $i\geqslant 1$ are larger-range forces, numerically more intensive than $F_0$ (since an atom basically interacts with all the others through these forces), but also smaller and more regular. A natural idea is thus not to compute the long-range forces at each step or, in other words, to use different time-steps for the different forces. Indeed, by classical considerations on the convergence of Euler schemes, the time-step should be related to the norm of the forces. This idea leads to the so-called multi-time-step methods, see \cite{TuckermanRossiBerne,GibsonCarter} and references within.
\subsection{Introduction to jump processes}
The method developed in this work is different. We will design a continuous-time process based on the Langevin diffusion
\begin{equation}\label{Eq-SDE-Langevin-F0}
\left\{\begin{array}{rcl}
\dot  q(t) & = & M^{-1}  p(t)\\
\dot p(t) & = &  -F_0\po q(t)\pf   - \gamma M^{-1} \tilde p(t)  - \sqrt{2\beta^{-1}\gamma }  \dot w(t)\,,
\end{array} \right.
\end{equation}
to which random jumps (or collisions, or bounces) will be added that will take into account the forces $F_i$'s, $i\geqslant 1$, in such that a way that ergodicity with respect to $\mu$ is enforced. Then a unique time-step will be used for the discretization. The resulting discrete-time chain will be such that the forces $F_i$ are only computed at random steps.

Before giving a rigorous definition of the process, let us give a brief and informal definition of the jump mechanisms. Denote $v=M^{-1}p$ the velocities of the particles. The trajectory basically follows \eqref{Eq-SDE-Langevin-F0} except that, between times $t$ and $t+\delta$ for $\delta$ small, the process has a probability $\lambda_i(q(t),v(t)) \delta + o(\delta)$ to undergo a collision due to the $i^{th}$ force for all $i\in\cco 1,K\ccf$, where $\lambda_i$ is called the $i^{th}$ jump rate and, here, is given by
\begin{equation}\label{Eq-Lambda_i}
\lambda_i(q,v) \ = \ \beta \max\po 0,   v' F_i(q) \pf\,.
\end{equation}
 At such a collision, the velocity jumps from $v$ to
\begin{equation}\label{Eq-R_i}
R_i(q,v) \ = \ v - 2 \frac{v'F_i(q)}{F_i(q)' M^{-1} F_i(q)}M^{-1} F_i(q)\,.
\end{equation}
Note that this is not defined if $F_i(q)=0$, but in that case the probability to see a collision is zero. When $M$ is a scalar matrix, this means $v$ is reflected orthogonally  to $F_i(q)$. More generally, $v_* = R_i(q,v)$ is the only vector that conserves the kinetic Energy (i.e. such that $v_*'Mv_*=v'Mv$)  such that $v_*'F_i(q) = - v'F_i(q)$. In other words, $M^{1/2}v_*$ is the orthogonal reflection of $M^{1/2}v$ with respect to $M^{-1/2}F_i(q)$.

This kind of velocity jump mechanism have received interest over the past years in various communities concerned with sampling problems\cite{PetersdeWith,MonmarcheRTP,DoucetPDMCMC}. The novelty in this work is that it is integrated within a classical Langevin diffusion through a force splitting, and implemented in a practical MD code.

\section{Theoretical settings}

\subsection{The theoretical continuous-time process}\label{Sec-Deftheorique}

 Let $T_0=0$ and suppose by induction that $(q_t,p_t)$ has been defined for $t\in[0,T_n]$ for some $n\in\N$. We call $T_n$ the $n^{th}$ jump (or collision) time. Let $(\tilde p,\tilde q)_{t\geqslant T_n}$ be the solution of the SDE \eqref{Eq-SDE-Langevin-F0} with initial condition $\po\tilde p(T_n),\tilde q(T_n)\pf =  \po  p(T_n),  q(T_n)\pf$. Let $(\mathcal E_i)_{i\in\cco 1,K\ccf}$ be independent random variables with standard exponential distribution, i.e. such that $\mathbb P(\mathcal E_i>s) = \exp(-s)$ for all $s>0$, independent from $(w(t))_{t\geqslant 0}$. For $i\in\cco 1,K\ccf$, let
 \begin{equation}\label{Eq-Si}
S_i\ = \ \inf \left\{t>T_n\,,\,\mathcal  E_i < \int_{T_n}^t \lambda_i(\tilde q_s,M^{-1} \tilde p_s)\dd s\right\}\,,
 \end{equation}
where $\lambda_i$ is given by \eqref{Eq-Lambda_i}. The next jump time is then defined as $T_{n+1} = \min\{S_i,\ i\in\cco 1,K\ccf\}$. Let $I_{n+1}$ be the index such that $T_{n+1} = S_{I_{n+1}}$ (since almost surely $T_i\neq T_j$ for $i\neq j$, $I_{n+1}$ is uniquely defined). For $t\in[T_n,T_{n+1})$, set $(p(t),q(t))=(\tilde q(t),\tilde p(t))$. At time $T_{n+1}$, the position is continuous, i.e. $q(T_{n+1})=\tilde q(T_{n+1})$, while the velocity $v = M^{-1} p$ undergoes a collision due to the $I_{n+1}^{th}$ jump mechanism, and jumps to the value
\[v\po T_{n+1}\pf \ = \  R_{I_{n+1}}\po \tilde q\po T_{n+1}\pf,  \tilde v\po T_{n+1}\pf\pf   \,,\]
with $R_i$ given by \eqref{Eq-R_i}. The process is thus defined up to $T_{n+1}$ and, by induction, up to $T_k$ for all $k\in\N$. From now on, we assume that  $\|F_i\|_\infty := \sup_{q\in\R^{3N}}\| F_i(q)\| < +\infty$ for all $i\in\cco 1,K\ccf$ (while $F_0$ may be unbounded and admit singularities). Under this assumption, using the bounds on the $F_i$'s and the fact the norm of the velocity is not modified at jump times, it can be proven that there is almost surely a finite number of jumps in a finite interval\cite{DurmusGuillinMonmarcheToolbox}. As a consequence, the trajectory is defined for all times.

It is readily checked\cite{Monmarche2019KineticWalks} that $\int \mathcal L\varphi \dd \mu = 0$ for all nice $\varphi$ where $\mathcal L$ is the Markov generator associated to the process, which informally means that $\mu$ is an invariant distribution for the process. The rigorous proof that the process is ergodic with respect to $\mu$ in the typical settings of molecular dynamics (i.e. with singular forces) is beyond the scope of the present paper, but we give a sketch of the proof of the following:

\begin{thm}
In the case of periodic conditions, i.e. $q\in(\R/\Z)^{3N}$, suppose that $F_i$ is $\mathcal C^2$ for all $i\in\cco 0,K\ccf$. Then the process is ergodic with respect to $\mu$, in the sense that \eqref{EqErgodic} holds for all bounded measurable $\varphi$.
\end{thm}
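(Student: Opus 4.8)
The plan is to realize the process as a positive Harris recurrent Markov process and then to invoke the ergodic theorem for such processes, the two structural ingredients being a Foster--Lyapunov drift condition and a local minorization (small-set) condition, in the spirit of Harris' theorem. Invariance of $\mu$ is already granted by the identity $\int\mathcal L\varphi\,\dd\mu=0$, and non-explosion has been established above, so the task reduces to producing these two ingredients together with an irreducibility statement that upgrades mere invariance to uniqueness and ergodicity.

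First I would exhibit a Lyapunov function. Since $q$ lives on the compact torus $(\R/\Z)^{3N}$, the only source of non-compactness is the momentum, and the natural candidate is $V(q,p)=1+E_{kin}(p)$. Writing $\mathcal L=\mathcal L_{\mathrm{Lang}}+\mathcal L_{\mathrm{jump}}$, a direct computation gives $\mathcal L_{\mathrm{Lang}}E_{kin}=-F_0\po q\pf' M^{-1}p-\gamma\, p'M^{-2}p+\gamma\beta^{-1}\mathrm{tr}\po M^{-1}\pf$. Because the $F_i$ (here including $F_0$) are $\mathcal C^2$ on a compact set, hence bounded, the cross term is absorbed into the dissipative term $-\gamma\,p'M^{-2}p$ up to an additive constant, yielding $\mathcal L_{\mathrm{Lang}}V\le -cV+b$ for some $c,b>0$. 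Crucially, the jump map $R_i$ conserves the kinetic energy by construction ($v_*'Mv_*=v'Mv$), so $\mathcal L_{\mathrm{jump}}E_{kin}=0$ and the jumps do not spoil the drift: $\mathcal L V\le -cV+b$ on the whole phase space, which forces the momentum to return to compact level sets.

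Second, and this is the delicate point, I would establish a minorization on the level sets $\{V\le r\}$. Here I would lower-bound the transition kernel by the contribution of trajectories that undergo \emph{no} jump on a fixed interval $[0,t_0]$: on that event the process coincides with the Langevin diffusion \eqref{Eq-SDE-Langevin-F0}, and its conditional no-jump probability is $\exp\po-\int_0^{t_0}\sum_i\lambda_i\,\dd s\pf$, which is bounded below uniformly over initial conditions in a compact set, using $\|F_i\|_\infty<\infty$ and the a priori control of the velocity along diffusion paths provided by $V$. The underlying Langevin diffusion is hypoelliptic: the noise acts on all momentum coordinates, and a single bracket with the transport drift $M^{-1}p\cdot\na_q$ generates the position directions, so Hörmander's condition holds and the diffusion admits a smooth, positive transition density. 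Combining the uniform lower bound on the no-jump probability with the positivity of the diffusion density yields a local Doeblin inequality $P_{t_0}(x,\cdot)\ge\alpha\,\nu(\cdot)$ for all $x\in\{V\le r\}$, with $\nu$ an absolutely continuous probability measure and $\alpha>0$.

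Finally, I would record irreducibility and aperiodicity, which again follow from the no-jump/diffusion lower bound: by the Stroock--Varadhan support theorem the controllable hypoelliptic Langevin diffusion reaches every nonempty open set in positive time, so the full process is topologically irreducible and aperiodic. The drift condition, the minorization on level sets, and irreducibility then place the process within the scope of the Harris / Meyn--Tweedie ergodic theory, giving that $\mu$ is the unique invariant probability measure and that the process is positive Harris recurrent; the ergodic theorem for Harris processes then yields \eqref{EqErgodic} for every bounded measurable $\varphi$. I expect the minorization step to be the main obstacle: unlike for a pure diffusion, the regularity of the transition kernel of the full jump--diffusion is not off-the-shelf, and the no-jump lower bound must be combined carefully with the Lyapunov control of the velocity so that the bound on $\sum_i\lambda_i$ is genuinely uniform on each level set.
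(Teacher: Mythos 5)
Your proposal is correct and follows essentially the same route as the paper's proof: the kinetic energy as a Lyapunov function (unaffected by the jumps because they conserve $E_{kin}$), a local Doeblin/minorization condition obtained by lower-bounding the transition kernel by the no-jump event combined with the smooth positive density of the hypoelliptic Langevin diffusion, and conclusion via Harris/Meyn--Tweedie theory. If anything, you are more careful than the paper's sketch on one point: since $\lambda_i(q,v)=\beta\max(0,v'F_i(q))$ grows linearly in $\|v\|$, the no-jump probability is only uniform over compact sets of phase space (requiring the path-wise velocity control you invoke), whereas the paper loosely asserts that the rates are ``bounded'' and that the constant $r$ is independent of the starting point.
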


\begin{proof}
It is well-known and easily checked (see \cite{LeimkuhlerMatthewsStoltz} and references within) that the Langevin diffusion \eqref{Eq-SDE-Langevin-F0} admits $E_{kin}$ as a Lyapunov function. Moreover, by hypoellipticity, its transition kernel admits for all positive time a smooth density with respect to the Lebesgue measure. Since the kinetic energy is unchanged at a jump time, $E_{kin}$ is still a Lyapunov function for the jump process. Since the jump rates are bounded, there is for all positive time $t$ a positive probability $r$ (independent from the starting point of the process) that no jump occurs in the time interval $[0,t]$. Thus the transition kernel of the jump process is bounded below by $r$ times the transition kernel of the diffusion \eqref{Eq-SDE-Langevin-F0}, hence is bounded below (uniformly in the starting point in a given compact set) by a constant times the Lebesgue measure on some compact set. This means a local Doeblin condition is satisfied which, together with the drift condition enforced by the Lyapunov function, classically concludes (see \cite{Monmarche2019KineticWalks,DurmusGuillinMonmarche2018,2018MonmarcheCouplage,LeimkuhlerMatthewsStoltz} for details).
\end{proof}

\subsection{Interpretations}

Let us give some intuitions on the meaning of the jump mechanisms. Suppose to fix ideas that $F_i=\nabla E_i$ for some potential $E_i$, $\beta=1$ and $M$ is the Identity matrix. Then, from $\dot q= v$, we get $\partial_t E_i(q) = v' F_i(q)$, hence $\lambda_i(q,v) = \max(0,\partial_t E_i(q))$. As a consequence,
\[\int_0^t \lambda_i\po q(s),v(s)\pf \dd s \ = \ E_i\po q(t)\pf - E_i\po q(0)\pf  \] 
while $E_i$ is increasing along the trajectory, and zero otherwise. The definition of the time $S_i$ can thus be interpreted as follows: the exponential random variable $\mathcal E_i$ drawn at the beginning of the trajectory is the total amount of energy barrier that the process is allowed to cross uphill. When all this budget is spent, the process jumps  (see Fig. \ref{Fig-jump-time}). Remark that, due to the so-called lack of memory of the exponential law, i.e. the fact that, for all $s,t>0$,
\[\mathbb P \po \mathcal E > t+s\ |\ \mathcal E > t \pf = \mathbb P \po \mathcal E > s\pf \,,\]
it is in fact not necessary to keep in memory the accumulated jump rate involved in \eqref{Eq-Si}. Indeed, if we run the trajectory for some time $t_0>0$ and no jump occurs, we can reset the accumulated jump rate to zero and drew new variables $\mathcal E_1,\dots,\mathcal E_k$, and this won't affect the statistics of the trajectory. In other words, the process is Markovian (at any time, the law of the future trajectory depends on the current position but not on the past trajectory).

\begin{figure}
\centering
\includegraphics[scale=0.4]{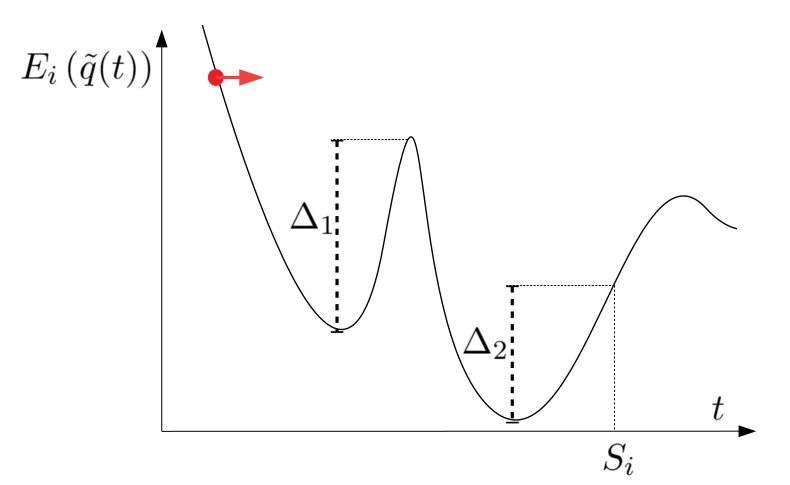}
\caption{As long as the process goes down the potential, there is no jump. When the total energy barrier crossed (here $\Delta_1+\Delta_2$) reaches the random value $\mathcal E_i$, the process jumps.}\label{Fig-jump-time}
\end{figure}

If the process has reached $(q,\tilde v)$ at time $S_i$ and jumps, then $F_i(q)=\nabla E_i(q)$ is the normal vector at point $q$ of the level set $\{E=E(q)\}$. Then the jump from $\tilde v$ to $R_i(q,\tilde v)$   corresponds to a specular reflection on the level set (see Figure \ref{Fig-reflection}).

\begin{figure}
\centering
\includegraphics[scale=0.2]{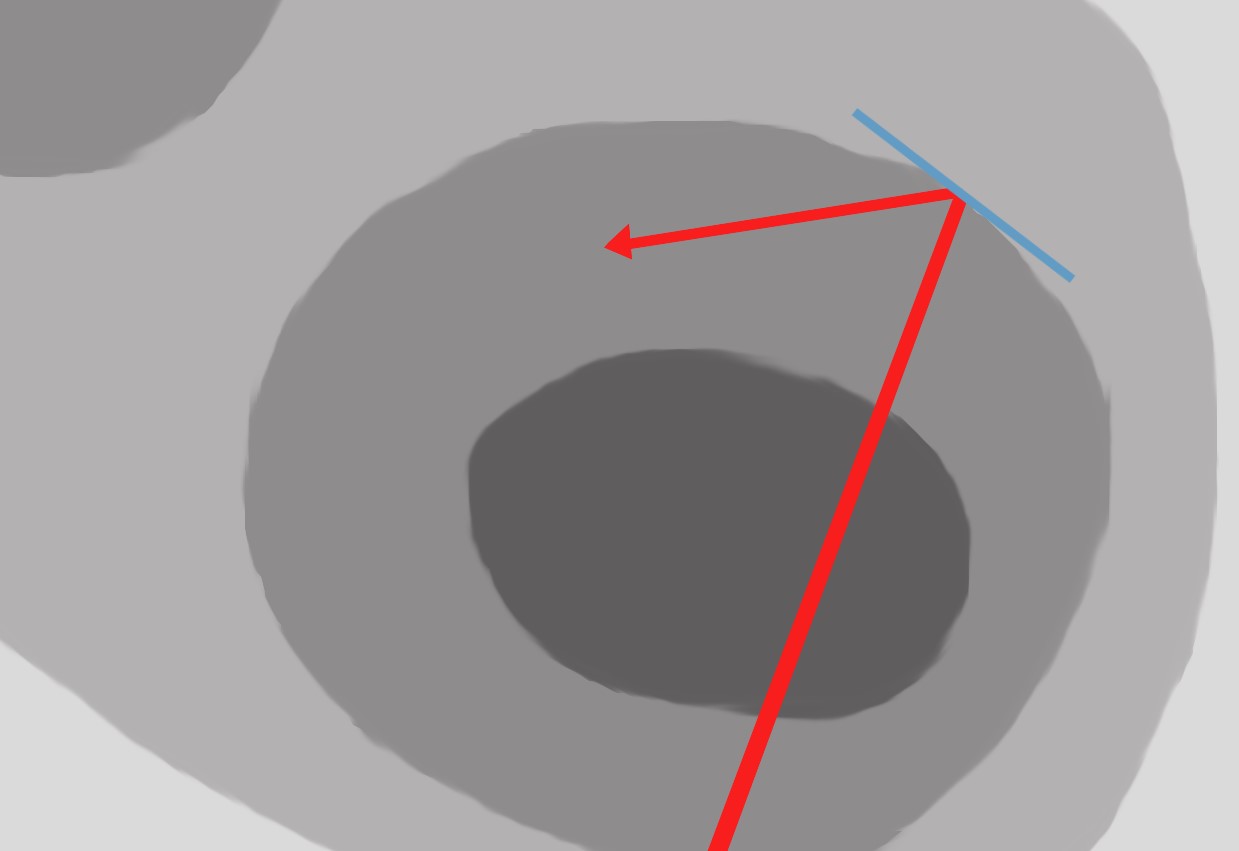}
\caption{At a collision time, the process bounces on the level set of the potential.}\label{Fig-reflection}
\end{figure}

Another interpretation is to see the process as the continuous limit of a lifted Metropolis-Hastings scheme, as in \cite{PetersdeWith,Monmarche2019KineticWalks}. For a given time-step $\delta>0$, consider the Markov transition from $(q_n,v_n)$ to $(q_n+ \delta v_n,v_n)$ with probability $\min( 1,\exp(-E_i(q_n+\delta v_n) + E_i(q_n))$, and to $(q_n+ \delta (v_n+R_i(q_n,v_n))/2,R_i(q_n,v_n)$ otherwise. 
Then this step (approximately as $\delta$ vanishes) fixes the probability law  $\mu$, by design of the Metropolis-Hastings acceptance probability. Remark that
\begin{eqnarray*}
 \min \po 1, e^{-E_i(q+\delta v)+E_i(q)}\pf & =& e^{-\max\po 0,E_i(q+\delta v)-E_i(q)\pf} \\
 &= & e^{- \delta \max\po 0, v'F_i(q)\pf} + o(\delta)\,.
\end{eqnarray*}
Then a bounce for the continuous process corresponds to a Metropolis rejection for the discrete chain. However, contrary to what happens for the classical reversible Metropolis-Hastings algorithm, here a rejection doesn't mean that the process is stopped (which would  impair the exploration of the space, hence the variance of the ergodic estimation), only that it changes its direction. For more complete and rigorous considerations on this interpretation, we refer to \cite{Monmarche2019KineticWalks} and references within.

\subsection{Markov generator and splitting scheme}\label{Sec-Splitting}

For a given observable $\varphi$, denote
\[P_t \varphi(q,p) \ = \ \mathbb E\co \varphi(q_t,p_t)\ |\ (q_0,p_0)=(q,p)\cf \]
the average of $\varphi$ over the distribution of $(q_t,p_t)$. Then this distribution is characterized by $\{P_t\varphi,\ \varphi\text{ continuous bounded}\}$. Similarly, for a Markov process, the motion of the process is characterized by the evolution of $P_t\varphi$ for a sufficiently large class of observables $\varphi$. This evolution is guided by the generator $\mathcal L$ of the process, which is a linear operator defined by
\[\mathcal L \varphi \ = \ \lim_{t\rightarrow 0} \frac{1}t\po P_t \varphi - \varphi\pf \,. \]
 for all suitable observable $\varphi$. In that case, from the Markovian property of the dynamics, we get that $\partial_t P_t \varphi = \mathcal LP_t \varphi$ for all $t\geqslant 0$ so that, informally, $P_t \varphi = e^{t\mathcal L} \varphi$.  Here, for the process introduced in Section \ref{Sec-Deftheorique}, following \cite{Monmarche2019KineticWalks,DurmusGuillinMonmarcheToolbox} (and the notations of \cite{Leimkuhler}) we get that $\mathcal L =  \mathcal L_A + \mathcal L_B + \mathcal L_O + \mathcal L_J $ where 
 \begin{eqnarray*}
 \mathcal L_A \varphi(q,p) & = & (M^{-1} p) \nabla_q \varphi(q,p)\\
  \mathcal L_B \varphi(q,p) & = & -F_0(q) \nabla_p \varphi(q,p)\\
  \mathcal L_O \varphi(q,p) & = & -\gamma (M^{-1} p) \nabla_p \varphi(q,p) + \gamma \beta^{-1} \Delta_p \varphi(q,p)\\
  \mathcal L_J \varphi(q,p) & = & \sum_{i=1}^K \lambda_i(q,M^{-1} p) \po \varphi\po q,p_*\pf - \varphi(q,p)\pf \,,
 \end{eqnarray*}
 with $p_*=MR_i(q,M^{-1})p$,  respectively corresponds to the free transport (A), the forces (B), the friction/dissipation (O), and the jumps (J). We approximate the continuous-time dynamics above following the Trotter/Strang splitting scheme\cite{Leimkuhler,BouRabee} 
\[e^{t \mathcal L} = e^{\frac12 t\mathcal L_B}  e^{\frac12 t\mathcal L_J} e^{\frac12 t\mathcal L_A} e^{t\mathcal L_O} e^{\frac12 t\mathcal L_A} e^{\frac12 t\mathcal L_J} e^{\frac12 t\mathcal L_B} + \underset{t\rightarrow 0}o(t^2)\,.\]
In other words, starting from the BAOAB scheme of \cite{Leimkuhler}, we add two half-time jump steps between the forces and the the transport parts. The process corresponding to each (half) step can be exactly simulated. Let us focus in the next section on the case of $\mathcal L_J$. The stochastic process $(q_t,p_t)$ associated to $\mathcal L_J$ is the following. The position $q_t = q$ is constant, and the velocity $v_t = M^{-1} p_t$ is a Markov chain that jumps from $v$ to $R_i(q,v)$ at rate $\lambda_i(q,v)$ for all $i\in\cco 1,K\ccf$. Remark that computing $\lambda_i(q,v)$ and $R_i(q,v)$ requires to compute $F_i(q)$.

\subsection{Efficient sampling of a Markov chain}\label{Sec-MarkovChain}

In a general abstract framework, let $\lambda_1,\dots,\lambda_K:\R^d\rightarrow \R_+$ be some jump rates and $R_1,\dots,R_K:\R^d\rightarrow\R^d$. Let us construct a Markov chain $X(t)\in\R^d$ that jumps from $x$ to $R_j(x)$ at rate $\lambda_j(x)$ for $j\in\cco 1,K\ccf$. The basic definition of such a chain is the following: starting at $X(0)=x$, let $\mathcal E_1,\dots,\mathcal E_K$ be independent standard exponential random variables and $S_j = \mathcal E_{j} / \lambda_{j}(x)$. Then $T=\min_{j\in\cco 1,K\ccf} S_j$ is the next jump time, and the Markov chain jumps at time $T$ from $x$ to $R_{J}(x)$ where $J$ is the index such that $T=S_J$ (almost surely uniquely defined). Then, the process starts again from this new position.

Suppose that there exists constants $\lambda_1^*,\dots, \lambda_K^*>0$ such that $\lambda_j(x)\leqslant   \lambda_j^*$ for all $x\in\R^d$, $j\in\cco 1,K\ccf$. Then the construction above is equivalent (in the sense that it defines the same Markov chain) to the following. Starting at $X(0)=x$, let $\mathcal E_1,\dots,\mathcal E_K$ be independent standard exponential random variables and $S_j = \mathcal E_{j} /   \lambda_{j}^*$. Let $T=\min_{j\in\cco 1,K\ccf} S_j$, and $J$ be the index such that $T=S_J$. We say that at time $T$, a jump of type $J$ is proposed. Let $U$ be a random variable uniformly distributed over $[0,1]$. If $U \leqslant \lambda_J(x)/ \lambda_J^*$ (which happens with probability $ \lambda_J(x)/\lambda_J^*$) then the chain jumps from $x$ to $R_J(x)$ at time $T$ (we say that the jump is accepted). Otherwise, the chain does not jump, i.e. $X(T)=x$ (we say that the jump is rejected). In both cases, the process starts again from its current position.

From a computational point of view, the main advantage of the second method is that the rate $\lambda_j(x)$ only have to be computed when a jump of type $j$ is proposed.

For a given fixed $t>0$, let $M_j$ be the number of proposed jumps of type $j$ in the time interval $[0,t]$ and $\{T_{k,j}\}_{k\in\cco 1,M_j\ccf}$ be the times of these proposals. Classical properties of the exponential distribution ensures the following facts: $M_j$ follows a Poisson distribution with parameter $\lambda_j^* t$ and, conditionally to $M_j$, the proposal times $\{T_{k,j}\}_{k\in\cco 1,M_j\ccf}$ are independent and uniformly distributed over $[0,t]$. Moreover,  if $i\neq j$, then $M_j$ and $M_i$ are independent and so are $\{T_{k,i}\}_{k\in\cco 1,M_i\ccf}$ and $\{T_{k,j}\}_{k\in\cco 1,M_j\ccf}$. In fact  we don't need to know the values of the proposal times $T_{k,j}$ since the process is constant between two proposals, but we only need to know the order in which the different  jump types are proposed.

As a consequence, we can  sample $X(t)$ as follows. First, draw $M_1,\dots,M_K$ as independent Poisson variables with respective parameters $\lambda_j^* t$ (that are, for each jump type, the total number of jump proposed during the time interval $[0,t]$). Then, draw a jump type $J\in\cco 1,K\ccf$ in such a way that $\mathbb P(J=j) = M_j/(M_1+\dots+M_K)$ for all $j\in\cco 1,K\ccf$. Propose a jump of type $J$, namely: with probability $\lambda_J(x)/\lambda_J^*$, $x$ jumps to $R_J(x)$, otherwise $x$ stays at $x$. Then update the remaining number of type $J$ jump proposals, i.e. set $M_J \leftarrow M_J-1$. Repeat until all the jump proposals have been considered, i.e. $M_1=\dots=M_K=0$. Then  the current position is $X(t)$.

The method is efficient if the upper bounds $\lambda_j^*$ are small, so that the number of jump proposals (and thus the number of computations of $\lambda_j(x)$) are small. Of course, for $\lambda_j^*$ to be small, we need at least the rates to be small, and then the upper-bound to be close to these rates.

\section{A first example}\label{SectionFirstAlgo}

\subsection{Decomposition of the forces}

Here we consider a total potential energy 
\[E_{pot}(q) = E_{elec}(q) + E_{VdW}(q) + E_{bond}(q)\]
with respectively the Coulomb electric, van der Waals, and bond potential.  For the sake of simplicity, in this first example, the jump mechanisms is only used for the long-range van der Waals forces. The van der Waals potential is 
\[E_{VdW}(q) \ = \  \sum_{i\neq j} W(|q_i-q_j|)\]
where $q_i\in\R^3$ is the coordinates of the center of the $i^{th}$ atom, $i\in\cco 1,N\ccf$, and 
\begin{equation}\label{VdW}
W(r) = E_0 \po \po \frac{r_0}{r}\pf^{12} - \po \frac{r_0}{r}\pf^{6}\pf  
\end{equation}
for some parameters $E_0,r_0>0$.  Consider some smooth switching function $\chi$ from $\R_+$ to $[0,1]$ with $\chi(r)=0$ if $r<r'$ and $\chi(r)=1$ if $r>r''$ for some thresholds $0<r'<r''$. 
For $i\in\cco 1,N\ccf$, let $A_i$ be a $(3N)\times 3$ matrix with all coefficients equal to zero except $A(3(i-1)+1,1)=A(3(i-1)+2,2)=A(3(i-1)+3,3)=1$. In other words, if $v=(v_1,\dots,v_N)\in\R^{3N}$ are the velocities of the $N$ atoms, then $A_i' v = v_i \in\R^{3}$. Then, denoting $r_{ij} = |q_i-q_j|$, decompose
\begin{eqnarray*}
\lefteqn{\nabla E_{VdW}(q)  } \\
&=& \sum_{i=1}^N A_i \nabla_{q_i} E_{VdW}(q)\\
& = & \sum_{i=1}^N A_i \sum_{j \neq i}^N \po   1  - \chi_1(r_{ij}) + \chi_1(r_{ij}) \pf  \nabla_{q_i} \po  W(r_{ij})\pf   \\
& = & F_{shortVdW}(q) + \sum_{i=1}^N A_i \sum_{j = 1}^N F_{i,j}(q)
\end{eqnarray*}
with 
\begin{eqnarray*}
F_{shortVdW}(q) & = &    \sum_{i=1}^N A_i \sum_{j \neq i}^N  \po 1 -  \chi(r_{ij}) \pf \nabla_{q_i} \po  W(r_{ij}) \pf\\
F_{i,j}(q) &=&   \chi(r_{ij})  \nabla_{q_i} \po   W(r_{ij})\pf
\end{eqnarray*}
for $j\neq i$, and zero otherwise. Note that we applied the switching function directly at the gradient level and that the short-range van der Waals forces $F_{shortVDW}$ are numerically less intensive than the long-range ones.

  As a conclusion, decompose the forces as
\[\nabla E_{pot}(q) = F_0 + \sum_{i=1}^N A_i \sum_{j=1}^N F_{i,j}(q) \]
with
\[F_0(q) \ = \  \nabla E_{elec}(q) + F_{shortVdW}(q)  + \nabla E_{bond}(q)\,.\]
Remark that $v' A_i F_{i,j} = v_i' F_{i,j}$. It means that the jump mechanisms associated with the vector field $A_i F_{i,j}$ only involves $v_i$. As a consequence, with such a decomposition, a Markov chain $V=(V_1,\dots,V_N)\in\R^{3N}$ with generator given by $\mathcal L_J$ (for fixed positions $q$) is such that the $V_i$'s are independent Markov chains, that can be simulated in parallel. For all $i\in\cco 1,N\ccf$, $V_i$ is a Markov chain that jumps from $v_i$ to $R_{i,j}(v_i)$ at rate $\lambda_{i,j}(v_i)$ for $j\in\cco1,N\ccf$, where
\begin{eqnarray*}
\lambda_{i,j}(v_i) &=& \beta \max\po 0, v_i' F_{i,j}(q)\pf\\ R_{i,j}(v_i)& =& v_i - 2 \frac{v_i'F_{i,j}(q)}{\| F_{i,j}(q)\|^2} F_{i,j}(q)\,.
\end{eqnarray*}
(the mass of an atom is a scalar matrix and thus it disappears in the expression of $R$). According to Section \ref{Sec-MarkovChain}, we need to find an upper bound of the jump rates. Here,
\[\lambda_{i,j} (v_i)  \ \leqslant \ \beta \| v_i\| \| F_{i,j}(q)\|\,.\]
Remark that, at a jump time, the velocity $v_i$ is transformed by an orthogonal reflection, so that its norm is conserved and thus $\|V_i(t)\|$ is constant along the trajectory of the Markov chain. On the other hand, using that $\|\nabla_{q_i}(r_{ij})\|=1$ and $\|\chi\|_\infty = 1$, we get that
\[\| F_{i,j}(q)\| \ \leqslant \  \| \chi(r_{ij}) W'(r_{ij})\| \ \leqslant \ \| \chi W' \|_{\infty}\,. \]
As a consequence, given any bound $L$ of $\| \chi W'\|_{\infty}$, we bound the jump rate by
\[\lambda_{i,j} (q,v)  \ \leqslant \ \lambda_{i,j}^* \ := \ \beta \|v_i\| L\,.\]
We can bound $\chi$ by the indicator function of $[r'',+\infty[$ to analytically derive an explicit bound of $\|\chi W'\|_{\infty}$. Nevertheless, as we saw in Section \ref{Sec-MarkovChain}, the lower is our bound, the more efficient is the algorithm, so instead of using an explicit non-optimal bound it is better to numerically compute $\|\chi W'\|_{\infty}$ once and for all at the beginning of the simulation. This is easily done since $\chi W'$ is a one-dimensional function.

\subsection{The algorithm}\label{SectionBJAOAJB}

Keep the notations of the previous section, and denote  $\delta$ the time-step. The symmetric splitting scheme presented in Section \ref{Sec-Splitting} reads

\begin{enumerate}
\item[(B)] Set $p \leftarrow p - \frac12\delta F_0(q) $.
\item[(J)] Set $p \leftarrow M V(\delta /2)$ where $V(0) = M^{-1} p$ and $V=(V_1,\dots,V_N)$ where for all $i\in\cco 1,N\ccf$, $(V_i(t))_{t\geqslant 0}$ is a Markov chain that jumps from $v_i$ to $R_{i,j}(v_i)$ at rate $\lambda_{i,j}(v_i)$ for all $j\in\cco 1,N\ccf$ (see below).
\item[(A)] Set $q \leftarrow q + \frac12\delta M^{-1} p$.
\item[(O)] Set $p \leftarrow e^{-\gamma \delta M^{-1} } p + \sqrt{\beta ^{-1} \po 1 - e^{-2\gamma \delta M^{-1}  } \pf M  } G$ with $G$ a standard Gaussian random variable. 
\end{enumerate}
Then, repeat steps A, J and B in this order. It remains to describe in detail the step J, which is the following:

\begin{itemize}
\item For $i\in\cco 1,N\ccf$, do
\item \qquad Initialize $v_i = \frac{1}{m_i}p_i$.
\item \qquad Draw $M_i$ a Poisson random variable with parameter $N\beta \|v_i\| L \delta/2$.
\item \qquad For $k=\cco 1,M_i\ccf$, do
\item \qquad \qquad Draw $J$ and $U$ uniformly distributed respectively over $\cco 1,N\ccf$ and $[0,1]$
\item \qquad \qquad If $U<\lambda_{i,J}(v_i) / \lambda_{i,J}^*$, do
\item \qquad\qquad\qquad $v_i \leftarrow R_{i,J}(v_i)$
\item \qquad \qquad end if
\item \qquad  end do
\item end do.
\end{itemize}

Note the following slight modification with respect to the general settings of Section~\ref{Sec-MarkovChain}: we don't generate different Poisson variables $M_{i,j}$ for each $j\in\cco 1,N\ccf$ corresponding to the $N$ jump mechanisms involving $v_i$. The reason is that the bounds $\lambda_{i,j}^*$ are the same for all $j$, and thus we directly sample the total number of jump proposed for the $i^{th}$ velocity (hence the $N$ factor in the parameter of the Poisson variable $M_i$) and, for each of these jump proposals, we chose at random its type $J$ (again, since the bounds are the same, the index $J$ such that $S_J=\min_{j\in\cco 1,N\ccf} S_j$ is uniformly distributed over $\cco 1,N\ccf$).

\subsection{Computational gain}\label{SectionCompGain}

In a usual scheme (for instance, the BAOAB one\cite{Leimkuhler}), at each time-step, $\nabla_q E_{VdW}$ has to be computed, which demands $N(N-1)$ computations of quantities of the form $\nabla_{q_i} \po  W(r_{ij}) \pf$. Let us compare this with  our algorithm.

At each step $F_{shortVdW}$ has to be computed once, which demands $\sum_{i=1}^N N_{n(i)}$ similar computations, where $N_{n(i)}$ is the number of atoms in a neighbour list that contains at least all atoms at distance less than $r''$  of the $i^{th}$ one. Moreover, at each half time-step and for each atom $i\in\cco 1,N\ccf$,  $M_i$ computations are required, whose expected value is $\beta N \|v_i\| L \delta/2$. At equilibrium, $v_i$ is distributed according to an isotropic Gaussian law with  variance $1/(m_i\beta)$ where  $m_i$ is the mass of the $i^{th}$ atom. By the Jensen inequality, the expected value of $\|v_i\|$ is less than the standard deviation of this distribution, $\sqrt{3/(m_i\beta)}$. Thus the average total number of computations of quantities of the form $\nabla_{q_i} \po  W(r_{ij})  \pf$ at each time step is of order
\[\sum_{i=1}^N N_{n(i)} + \sqrt{3\beta} N  L \delta \sum_{i=1}^N m_i^{-1/2}\,. \]
Remark that the two parts of this quantity depend (in an opposite way) of the choice of the radius $r'$ that distinguishes short and long-range interaction. Indeed, as $r'$ increases, $N_{n(i)}$ increases but the Lipschitz constant $L$ decreases. The optimal choice of $r'$ should minimize the sum.

If we suppose that $N_{n(i)}\ll N$ when $N$ becomes large (the other parameters being fixed), which is the case if the volume of the system increases with a constant density, then the cost of computing the short-range forces is negligible with respect to $N^2$. As a consequence, for large $N$, the expected computation gain, in term of number of computations  of quantities of the form $\nabla_{q_i} \po  W(r_{ij})  \pf$, from using bounces instead of computing the full gradient at each step is a factor 
\[ \alpha \ :=\  \sqrt{3\beta}   L \delta \frac1N \sum_{i=1}^N m_i^{-1/2} \,. \]
The parameters of the droplet numerical experiments of Section~\ref{Section:numerique} yield $\alpha = 5.1*10^{-7}$ for a box of 1500 atoms water box using TIP3P model. In other words, concerning the computations of the long-range van der Waals forces, the cost is improved by five orders of magnitude.

Remark that $\alpha N^2$, which is the numerical cost of the jumps at each step, is proportional to the time-step $\delta$. This means that this cost by unit of (simulation) time is in fact independent from the time-step. In other words, the number of jumps proposed in a given time interval is independent from the time-step, which is consistent with the extensivity property of Poisson processes. The time between two jump proposals (and thus two computations of the force) can be seen as a random time-step. This inter-jump time is, by design,  automatically adapted to the Lipschitz norm of the corresponding force (each force being treated independently from the others). However, note that, even though the numerical cost of a given simulation is independent from the time-step (as far as the jump mechanism is concerned), the quality of the result is still impacted by this time-step, since the Trotter splitting induces some error with respect to the continuous-time theoretical process.

\section{The final "BOUNCE" algorithm}

In Section \ref{SectionFirstAlgo}, the algorithm was kept simple for the sake of clarity. We now present the full algorithm that we will hereafter denote as BOUNCE. We will apply the approach to standard biomolecular force fields: AMBER \cite{wang2000well} and CHARMM \cite{CharmmFF} in the context of two types of boundary conditions: droplets where all the interactions are computed in real space and periodic boundary conditions using standard Smooth Particle Mesh Ewald (SPME)\cite{SPME}.

First, two details regarding the van der Waals forces have been eluded in the previous section: in practice, \eqref{VdW} is already multiplied by a switching function (there is no van der Waals interaction between two particles that are too far). In particular, when periodic boundary conditions are applied, the cutoff is such that an atom is never interacting with multiple periodic replicas since the cutoff is smaller than half the size of the box (minimum image convention). For a given particle $i$ the number $N$ in the parameter of the Poisson law that gives the total number $M_i$ of proposed jump  is not the total number of particles but the total number of particles that interact with $i$ through the van der Waals force (and, when a particle $J$ is drawn to determine the type of the jump, $J$ is uniformly drawn between these particles, not all $\cco 1,N\ccf$). Moreover, the parameters $E_0$ and $r_0$ of the interaction depends on the nature of the two particles involved, and thus the cutoff parameter of $\chi$ and the Lipschitz bound $L$ may depend on the different types of particles.

In the droplets simulations, the Coulomb interaction is treated as the van der Waals one: a cutoff of 5 Angstroms determines which interaction falls within short-range and long-range, the short-range being evaluated in a standard way at each time step and the long-range being evaluated with the same jump mechanism as described above, after having determined the proper bounds. A switching function whose characteristics are detailed in SI is applied at the gradient level so that the short-range Coulomb force goes smoothly to zero near the cutoff. Note that due to the slow decay of Coulomb interactions with respect to the interatomic distance, no cutoff is applied to the full Coulomb interaction.

For periodic boundary conditions with SPME, Coulomb interactions are classically divided in a direct and reciprocal part:
\begin{eqnarray*}
E_{elec}(q) &=& \sum_{i\neq j} \frac{\varepsilon_{i,j}}{r_{ij}} \co \rm{erf}(r_{ij}) +\po 1 - \rm{erf}(r_{ij}) \pf\cf  \\
& := & E_{recip}(q) + E_{direct}(q)
\end{eqnarray*}
with $\rm{erf}$ the error function. Following the SPME algorithm, the reciprocal part is computed in the Fourier space with fast Fourier transform. We decompose the direct part as
\begin{eqnarray*}
\lefteqn{\nabla   E_{direct}(q)  \ = } \\
&  & \sum_{i\neq j}  \co \chi(r_{ij}) +\po 1 - \chi(r_{ij}) \pf\cf \nabla \po \frac{\varepsilon_{i,j}}{r_{ij}} \po 1 - \rm{erf}(r_{ij}) \pf \pf \\
& = & F_{shortdir}(q) + \sum_{i\in N}A_i\sum_{j\neq i} \tilde F_{i,j}(q)
\end{eqnarray*}
for some cutoff function $\chi$, like for the van der Waals forces. The long-range parts $\tilde F_{i,j}$ are Lipschitz vector fields that are treated through a jump mechanism.

The short-range van der Waals and Coulomb forces, together with the bond forces $\na E_{bond}$ are all gathered in $F_0$. The last forces that have to be taken into account are the reciprocal forces $\na E_{recip}$, which are many-body forces. In the basic BOUNCE algorithm, they  are also integrated in $F_0$, hence evaluated at each time-step.
 In fact, there would be no theoretical objection to the use of jump mechanisms to treat many body forces but, as we saw, the practical efficiency of the method is related to the capacity to get good (i.e. small) bounds on the forces (in particular, for each type of jump, the average number of proposals per time-step should be less than 1 so that the forces are often not computed at all). This is a non-trivial question in the case of the reciprocal forces. 
Nevertheless, we can also use for them a classical multi-time-step method, the RESPA algorithm \cite{TuckermanRossiBerne}. More precisely, denote Q the succession of steps BJAOAJB as defined in Section \ref{SectionBJAOAJB}, with some timestep $\delta$ and where the jump step J involves both the van der Waals and Coulomb long-range forces. Let $m$ be an even integer. 

Then the evolution of $(q_t,p_t)$ over a timestep $\Delta=m\delta$ is approximated as follows:
\begin{itemize}
    \item Set $p\leftarrow -\frac{\Delta}{2} \na E_{recip}(q)$.
    \item perform $m$ times the step Q with timestep $\delta$.
    \item Set $p\leftarrow -\frac{\Delta}{2} \na E_{recip}(q)$.
\end{itemize}
In other words, a Trotter/Strang splitting scheme is performed with the decomposition
\[e^{\Delta \mathcal L} \simeq  e^{\frac{\Delta}{2} \mathcal L_{recip}} \po e^{\delta \mathcal L_Q}\pf^{m}e^{\frac{\Delta}{2} \mathcal L_{recip}}\]
where
\[\mathcal L_{recip} \varphi(q,p) = -\na E_{recip}(q) \na_p\varphi(q,p)\,.\]
In practice, what we will call BOUNCE-RESPA in the rest of the text is a further standard bonded/non-bonded decomposition where not only the reciprocal space part of the Coulomb interaction is evaluated at an outer bigger time step, but also the rest of the non bonded part of the potential which is not evaluated through the jump mechanism:
\[F_{out} \ = \ F_{shortdir} + F_{shortVdW} - \nabla E_{recip}\]
such that the algorithms reads:
\[e^{\Delta \mathcal L} \simeq  e^{\frac{\Delta}{2} \mathcal L_{out}} \po e^{\delta \mathcal L_Q}\pf^{m}e^{\frac{\Delta}{2} \mathcal L_{out}}\]

The same reasoning can be made for the droplets simulations without the reciprocal space part of the Coulomb energy and by replacing the short-range direct space part of the Ewald sum by the actual short-range part of the total Coulomb energy.
Note that in principle the forces that are evaluated through jumps could be placed at the outer level of the splitting scheme but, due to the extensivity of the jumps, the computational efficiency would not improve, and we would expect the dynamics to be more perturbed by the splitting. As a consequence, the jump step remains a part of $\mathcal L_Q$.

As we see, the RESPA and BOUNCE methods are straightforwardly compatible. Remark that, while the basic BOUNCE algorithm avoids resonance problems since it is a simple discretization (with a unique time-step $\delta$) of an ergodic theoretical process, the BOUNCE-RESPA may, like the original RESPA, suffer from this limitation.
Furthermore, it is straightforward to use an additional splitting of the potential like RESPA1 (with PME) \cite{RESPA1} with the BOUNCE procedure: in this case, $F_{out}$ is further split into a part corresponding to the short-range portion of the non bonded term that is evaluated at an intermediate time step and a part consisting only of the reciprocal space interactions that are evaluated at a larger outer time step.

\section{Assessments of BOUNCE capabilities: numerical experiments}\label{Section:numerique}

To asses the computational capabilities and accuracy of the BOUNCE integrator, we have performed a first implementation in the Tinker-HP massively parallel software for molecular dynamics.\cite{tinkerhp} All tests have been performed using the classical, non-polarizable TIP3P\cite{TIP3P} and SPC\cite{Berendsen_1987} water models and with the CHARMM\cite{CharmmFF} or AMBER\cite{wang2000well} force fields. All bonds are flexible. Computations have been performed on the Occigen supercomputer (CINES, France) using Intel Broadwell processors. The software will be made freely available to academic users within the next release of the Tinker-HP code\cite{tinkerhp,githubtinkerhp,SiteTinkerHP}.
\subsection{Evaluation of direct space  accelerations using BOUNCE: proof of principle droplet simulations}
Because the splitting of the potential between short- and long-range is straightforward in this context, pure direct-space simulations are a good proof of concept to assess the capabilities of this new integrator.  Tests have been performed on large droplets of TIP3P water: first a droplet of 500 TIP3P water molecules and a 9737 atoms solvated ubiquitin protein described with the Amber (f99) force field at 300 Kelvin. In both  cases, a repulsive van der Waals-like wall was imposed to block the atoms to drift away during the simulation. Table I shows the relative acceleration of BOUNCE compared to reference BAOAB (1 fs) simulations and to the popular RESPA with a bonded/non-bonded split (0.5fs/2fs) approach. As expected since BOUNCE evaluates much less often the long-range part of electrostatic and van der Waals interactions than standard integrators, it is nearly 4 times faster than the BAOAB and and twice as fast as BAOAB-RESPA approaches. In fact, when using BOUNCE,  we observe an average number of jump proposals per time-step per atom for the van der Waals interaction of $4.7* 10^{-4}$, in accordance with the a priori estimate of $\alpha*N=7.65*10^{-4}$ given in Section~\ref{SectionCompGain} (corresponding to $\alpha=5.1*10^{-7}$) which explains that BOUNCE always provides a net computational advantage. In that connection, Table \ref{Table-directspace} also shows that the BOUNCE algorithm does adapt to each situation as different force field models give rise to different accelerations, i.e. TIP3P is slightly faster than SPC as is the Amber/TIP3P combination compared to the CHARMM/TIP3P one.A straightforward analysis of the accuracy of the method can be performed by computing the average potential energies of the systems. For the pure TIP3P water droplet after 100ps simulations, the average BAOAB total Potential Energy is of -4371.22 Kcal/mole (-4390.31 Kcal/mole with BAOAB-RESPA 0.5/2fs) whereas BOUNCE obtains a close value of -4387.18 Kcal/mole, with a reference of -4411.80 Kcal/mole for VERLET 0.5fs. In practice the BOUNCE energy is closer to the VERLET reference than BAOAB-RESPA which amounts for -4475.90 Kcal/Mol. Similar results are obtained for a non-homogeneous system such as the solvated ubiquitin (9737 atoms) (see Table \ref{Table-directspace} and Supplementary Information) demonstrating the robustness of the BOUNCE integrator. 
\begin{table}[h!]
\begin{ruledtabular}

\centering
 \begin{ruledtabular}
\centering
\begin{tabular}{||c c c||} 
 Waterbox &   Waterbox & Waterbox   \\
BAOAB & BAOAB-R(0.5fs/2fs) & BOUNCE \\ [0.5ex] 

\hline
\hline
 
 1 (TIP3P) & 1.98 (TIP3P) & 3.96 (TIP3P) \\
 1 (SPC) & 1.95 (SPC) & 3.82 (SPC) \\
  \end{tabular}
\begin{tabular}{||c c c||} 
  Ubiquitin & Ubiquitin & Ubiquitin \\
  BAOAB (1fs) & BAOAB-R(0.5fs/2fs) & BOUNCE\\ [0.5ex] 

\hline
\hline
 
  1 (AMBER) & 1.97 (AMBER) & 3.81 (AMBER) \\
  1 (CHARMM)& 1.95 (CHARMM)& 3.61 (CHARMM) \\
 \end{tabular}
 \end{ruledtabular}

 \caption{Relative speedups for a 1500 molecules TIP3P waterbox and solvated ubiquitin (AMBER FF99/CHARMM protein+TIP3P water) direct-space droplet systems simulation for the BAOAB, BOUNCE and BAOAB-RESPA integrators, with BAOAB as a reference.}
 \label{Table-directspace}
 \end{ruledtabular}
 
\end{table}

\subsection{Accelerating Condensed Phase Simulations: with BOUNCE-RESPA while keeping accuracy}
At this stage, the application of BOUNCE to direct space computations are promising and can already be useful in a variety of cases such as QM/MM computations or continuum solvation models but we want to see how the method performs in the widely used context of periodic boundary conditions using the Particle Mesh Ewald.
 This section proposes an in-depth study of the capabilities of the BOUNCE algorithm to reproduce condensed phase properties of bulk water such as the average potential energies and radial distribution functions. Although the BOUNCE process is not designed to reproduce dynamical properties of the Langevin dynamics, we also include the diffusion coefficient since it is an indicator of the sampling rate.\cite{pushing} The rest of the section shows results on 3 test systems: first a cubic water box of edge 24.662 Angstroms, and then ubiquitin and DHFR proteins solvated in a 72x54x54 Å  water and 64x64x64Å boxes respectively (see simulation details in SI).  

\subsubsection{BOUNCE Computational Performance in Periodic Boundary Conditions: coupling with RESPA/RESPA1}
 We chose BAOAB (1fs) with a friction of 1ps$^{-1}$ and Verlet (0.5) as references and compared BOUNCE to them along with  BOUNCE-RESPA,  BOUNCE-RESPA1 and BAOAB-RESPA1 integrators. The BOUNCE-RESPA and BAOAB-RESPA integrators used a bonded/non-bonded split with a 0.5fs timestep for the bonded forces and a 2fs for the non-bonded ones. The BOUNCE-RESPA1 and BAOAB-RESPA1 use the splitting described above and a 0.5fs timestep for the bonded forces, a 2fs for the intermediate ones and a 6fs timestep for the long-range ones. First, as expected, replacing BAOAB by BOUNCE always results in a net acceleration as the discussed average number of jump proposal per particle per time-step of $5.8* 10^{-3}$ is observed in practice for the direct space interactions. BOUNCE alone without any reciprocal space specific treatment appears competitive with the popular RESPA with a bonded/non-bonded split (0.5/2fs) approach whereas all BOUNCE-RESPA or BOUNCE-RESPA1 outperforms their BAOAB counterparts (see Tables \ref{table-speedupwaterpbc},\ref{table-speedup-ubiquitinpbc}, \ref{table-speedup-dhfrpbc}). It is important to note that by definition, different systems with different model parameters (TIP3P vs SPC, CHARMM vs AMBER) will result in different bounds in the algorithm and therefore BOUNCE will  perform differently among them. It is worth mentioning here too that all proposed results depend on the splitting of the potential used to define the various blocks of the schemes, so that one could choose other cutoffs to define the portion of the potential treated by jumps and the range of the window switch between short- and long-range, that could lead to improved performances. In order to illustrate that, we also compared the performance of simulations using the ubiquitin system and the CHARMM force field with a larger real space cutoff of 10 Å (instead of 7 Å) and a correspondingly larger grid spacing of 1.2 Å (instead of 0.8 Å) as done in other codes\cite{GROMACS}. The results (see Table \ref{table-speedup-ubipbc2}) show an increase of the computational speedup from 1.67 to 1.88 for the pure BOUNCE integrator, 2.22 to 2.33 for the BOUNCE-RESPA and 3.2 to 3.29 for the BOUNCE-RESPA1 integrator compared to the original BAOAB setup.  Finally, another splitting between short-range and long-range for the Coulomb interaction could also be used and could lead to enhanced performances. Since testing all possibilities is beyond the scope of this initial paper, we chose the RESPA1 way of splitting the Coulomb interaction with SPME since its accuracy is less sensitive to the switching function between short- and long-range. \cite{RESPA1switch}.
\subsubsection{BOUNCE accuracy in Periodic Boundary Conditions: computation of condensed phase properties}

We first assessed the capability of the BOUNCE integrator to reproduce the radial distribution functions of water. Figure \ref{Fig-jump-OO} displays the oxygen-oxygen radial distribution in a 1500 atom water-box with the SPC model, Figure \ref{Fig-jump-OH} the Oxygen-hydrogen radial distribution with the same system and model and Figure \ref{Fig-jump-HH} the hydrogen-hydrogen one, after a 2ns simulation. As a reference, we choose a velocity verlet integrator with a small timestep (0.5fs) and the Bussi thermostat\cite{Bussi}.  We compared it to the results obtained with BOUNCE along with  BOUNCE-RESPA,  BOUNCE-RESPA1 and the various BAOAB-based integrators. Figures III to VII show that the native BOUNCE results are extremely close to the reference BAOAB and Verlet values. When coupled with RESPA or RESPA-1, the BOUNCE approaches give a comparable agreement with the reference compared to their BAOAB-RESPA(1) counterparts. Similar trends are observed for average potential energies (see Figures \ref{Fig-waterbox-epot} and \ref{Fig-ubiquitin-epot}).
Knowing that the continuous in time dynamic approached by BOUNCE is different than the Langevin one (given a fixed friction coefficient) which is in turn different than the microcanonical one, we expect dynamical properties computed with these approaches to differ. But the diffusion coefficient of a dynamic is also an indicator of its sampling rate\cite{BerendsenHMR} which makes it interesting to compute: a low self-diffusion coefficient would indicate a lower sampling rate and would then negate the observed computational speedups. Although BOUNCE does indeed affect the dynamics (see discussion in section \ref{Sec-Deftheorique}), the jumps only concern a small fraction of the forces and therefore the self diffusion coefficient (that we computed using the Einstein relation) turns out to be very close to those of the initial Langevin dynamics. In practice, Figure \ref{Fig-jump-diff} shows that the self-diffusion constant of BOUNCE and its RESPA-based variations remains close to the reference BAOAB, meaning that using BOUNCE does not degrade the sampling rate of a dynamic compared to a pure Langevin dynamic computed with BAOAB and with same friction.

\begin{table}[h!]
\begin{ruledtabular}
\centering
 \begin{ruledtabular}
\centering
\begin{tabular}{||c c c||} 
 1fs & 0.5/2fs & 0.5/2/6fs  \\
BOUNCE & BOUNCE-R & BOUNCE- R1  \\ [0.5ex] 

\hline
\hline
 
 1.58 (SPC) & 2.575 (SPC) & 3.25 (SPC) \\
 1.58 (TIP3P) & 2.51 (TIP3P) & 3.18 (TIP3P) \\
 \end{tabular}
\begin{tabular}{||c c c||} 
  & 0.5/2fs & 0.5/2/6fs  \\
 & BAOAB-R & BAOAB-R1\\ [0.5ex] 

\hline
\hline
  & 1.78 (SPC)  &  2.5(SPC) \\
 & 1.78 (TIP3P)  & 2.49 (TIP3P)
 \end{tabular}
 \end{ruledtabular}

 \caption{Speedup comparison on a 1500 atoms periodic boundary conditions waterbox system (SPC and TIP3P water models) of the BOUNCE, BOUNCE-RESPA, BOUNCE-RESPA1, BAOAB-RESPA and BAOAB-RESPA1 integrators, with BAOAB as a reference }
 \label{table-speedupwaterpbc}
 \end{ruledtabular}
\end{table}

\begin{table}[h!]
\begin{ruledtabular}
\centering
 \begin{ruledtabular}
\centering
\begin{tabular}{||c c c||} 
 1fs & 0.5/2fs & 0.5/2/6fs  \\
BOUNCE & BOUNCE-R & BOUNCE- R1  \\ [0.5ex] 

\hline
\hline
 
 1.40 (AMBER) & 1.85 (AMBER) & 2.7 (AMBER) \\
 1.67  (CHARMM) & 2.2  (CHARMM) & 3.2  (CHARMM) \\
 \end{tabular}
\begin{tabular}{||c c c||} 
  & 0.5/2fs & 0.5/2/6fs  \\
 & BAOAB-R & BAOAB-R1\\ [0.5ex] 

\hline
\hline
 
 & 1.64 (AMBER) & 2.55 (AMBER) \\ 
 & 1.65 (CHARMM) & 2.67 (CHARMM)
 \end{tabular}
 \end{ruledtabular}

 \caption{Speedup comparison on a 9737 atoms  periodic boundary conditions ubiquitin protein system (Amber FF99/CHARMM model protein and TIP3P model water) of the BOUNCE, BOUNCE-RESPA, BOUNCE-RESPA1, BAOAB-RESPA and BAOAB-RESPA1 integrators, with BAOAB as a reference}
 \label{table-speedup-ubiquitinpbc}
 \end{ruledtabular}
\end{table}

\begin{table}[h!]
\begin{ruledtabular}
\centering
 \begin{ruledtabular}
\centering
\begin{tabular}{||c c c||} 
 1fs & 0.5/2fs & 0.5/2/6fs  \\
BOUNCE & BOUNCE-R & BOUNCE- R1  \\ [0.5ex] 

\hline
\hline
 
 1.36 & 1.71 & 2.43
 \end{tabular}
\begin{tabular}{||c c c||} 
  & 0.5/2fs & 0.5/2/6fs  \\
 & BAOAB-R & BAOAB-R1\\ [0.5ex] 

\hline
\hline
 
 & 1.57 & 2.21
 \end{tabular}
 \end{ruledtabular}

 \caption{Speedup comparison on a 23558 atoms DHFR   periodic boundary conditions protein system (CHARMM force field and TIP3P Water) of the BOUNCE, BOUNCE-RESPA, BOUNCE-RESPA1, BAOAB-RESPA and BAOAB-RESPA1 integrators, with BAOAB as a reference}
 \label{table-speedup-dhfrpbc}
 \end{ruledtabular}
\end{table}
\begin{table}[h!]
\begin{ruledtabular}
\centering
 \begin{ruledtabular}
\centering
\begin{tabular}{||c c c||} 
 1fs & 0.5/2fs & 0.5/2/6fs  \\
BOUNCE & BOUNCE-R & BOUNCE- R1  \\ [0.5ex] 

\hline
\hline
 
 1.88 & 2.33 & 3.29
 \end{tabular}
\begin{tabular}{||c c c||} 
 1fs & 0.5/2fs & 0.5/2/6fs  \\
 BAOAB & BAOAB-R & BAOAB-R1\\ [0.5ex] 

\hline
\hline
 
 0.79 & 1.33 & 2.35
 \end{tabular}
 \end{ruledtabular}

 \caption{Speedup comparison on a 9737 atoms  periodic boundary conditions Ubiquitin protein system (CHARMM force field and TIP3P Water) of the BOUNCE, BOUNCE-RESPA, BOUNCE-RESPA1, BAOAB, BAOAB-RESPA and BAOAB-RESPA1 integrators using a 10 angstroms real space cutoff and a 1.2 Angstroms grid spacing for PME, with BAOAB and a 7 angstroms real space cutoff and a 0.8 grid spacing for PME (see SI) as a reference}
 \label{table-speedup-ubipbc2}
 \end{ruledtabular}
\end{table}

\begin{figure}
\centering
\includegraphics[scale=0.5]{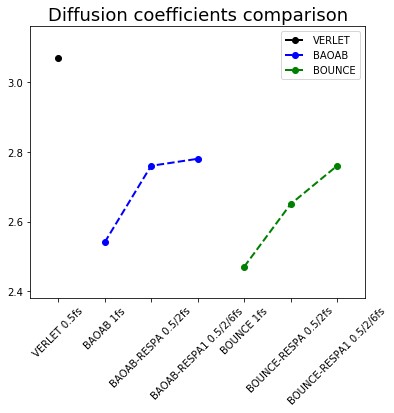}
\caption{Diffusion coefficient in a 1500 atoms  periodic boundary conditions water box using the SPC model after a 2ns simulation. Comparison of the BOUNCE, BOUNCE-RESPA, BOUNCE-RESPA1 and BAOAB-RESPA1 integrators, with BAOAB as a reference.}\label{Fig-jump-diff}
\end{figure}

\begin{figure}
\centering
\includegraphics[scale=0.5]{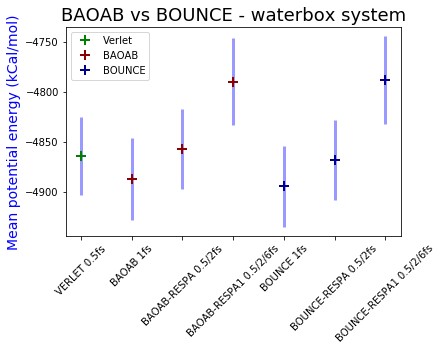}
\caption{Average potential energy comparison in a 1500 atoms  periodic boundary conditions water box using the SPC model after a 2ns simulation. Comparison of the BOUNCE, BOUNCE-RESPA, BOUNCE-RESPA1 and BAOAB-RESPA1 integrators, with BAOAB as a reference.}\label{Fig-waterbox-epot}
\end{figure}

\begin{figure}
\centering
\includegraphics[scale=0.5]{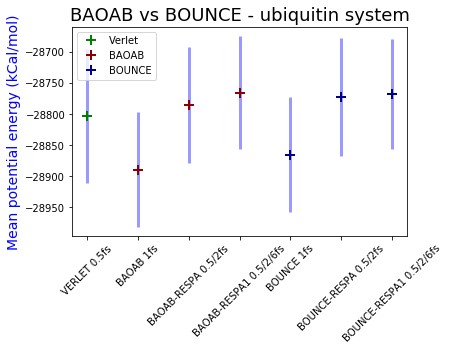}
\caption{Average potential energy comparison in a 9737 atoms  periodic boundary conditions solvated ubiquitin protein system (Amber FF99 protein force field and TIP3P water) after a 1ns simulation. Comparison of the BOUNCE, BOUNCE-RESPA, BOUNCE-RESPA1 and BAOAB-RESPA1 integrators, with BAOAB as a reference.}\label{Fig-ubiquitin-epot}
\end{figure}

\begin{figure}
\centering
\includegraphics[scale=0.1]{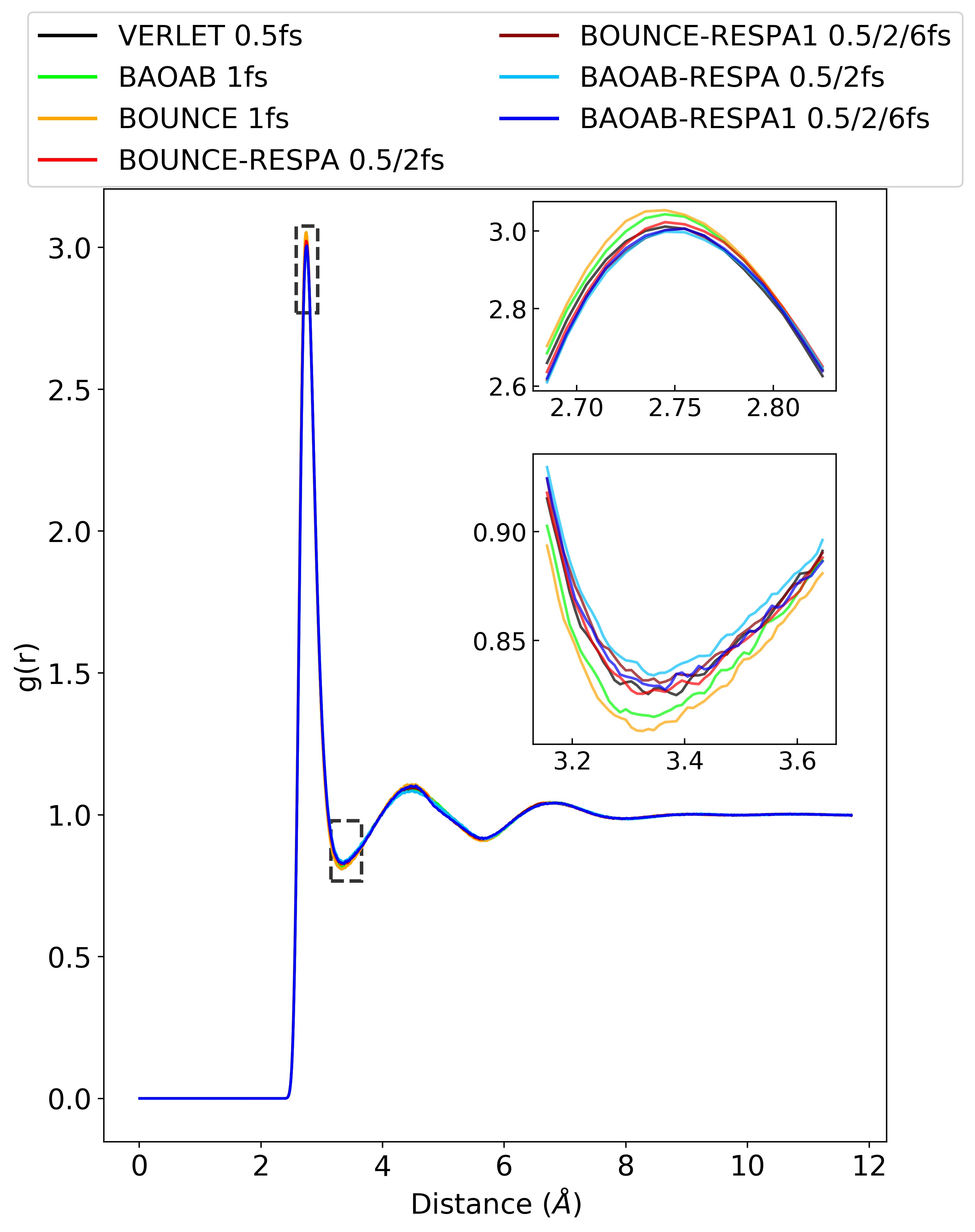}
\caption{Oxygen-oxygen radial distribution obtained with a 1500 atoms  periodic boundary conditions water box using the SPC model after a 2ns simulation. Comparison of the BOUNCE, BOUNCE-RESPA, BOUNCE-RESPA1 and BAOAB-RESPA1 integrators, with BAOAB as a reference.}\label{Fig-jump-OO}
\end{figure}

\begin{figure}
\centering
\includegraphics[scale=0.1]{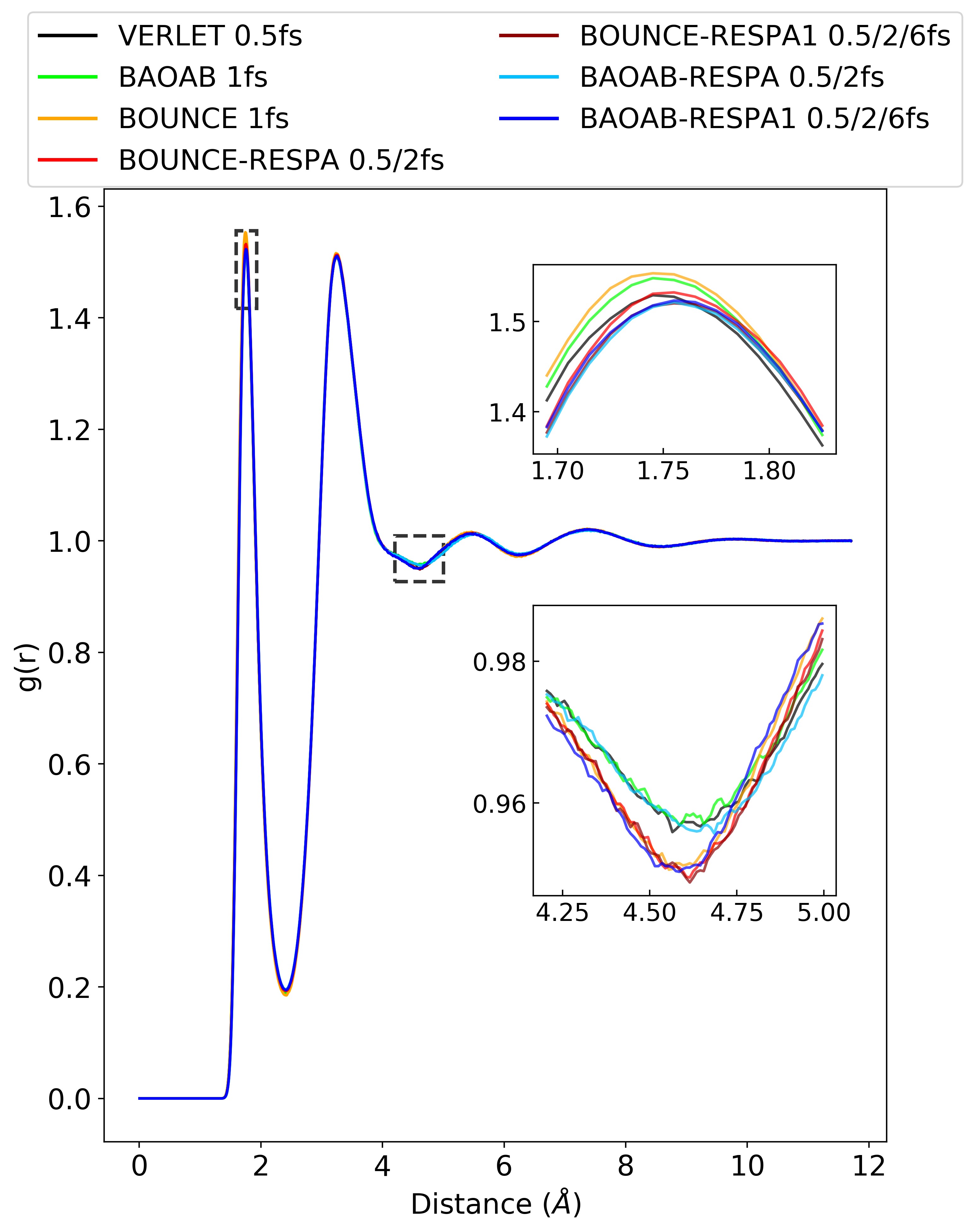}
\caption{Oxygen-hydrogen radial distribution obtained with a 1500 atom  periodic boundary conditions water-box using the SPC model after a 2ns simulation. Comparison of the BOUNCE, BOUNCE-RESPA, BOUNCE-RESPA1 and BAOAB-RESPA1 integrators, with BAOAB as a reference.}\label{Fig-jump-OH}
\end{figure}

\begin{figure}
\centering
\includegraphics[scale=0.1]{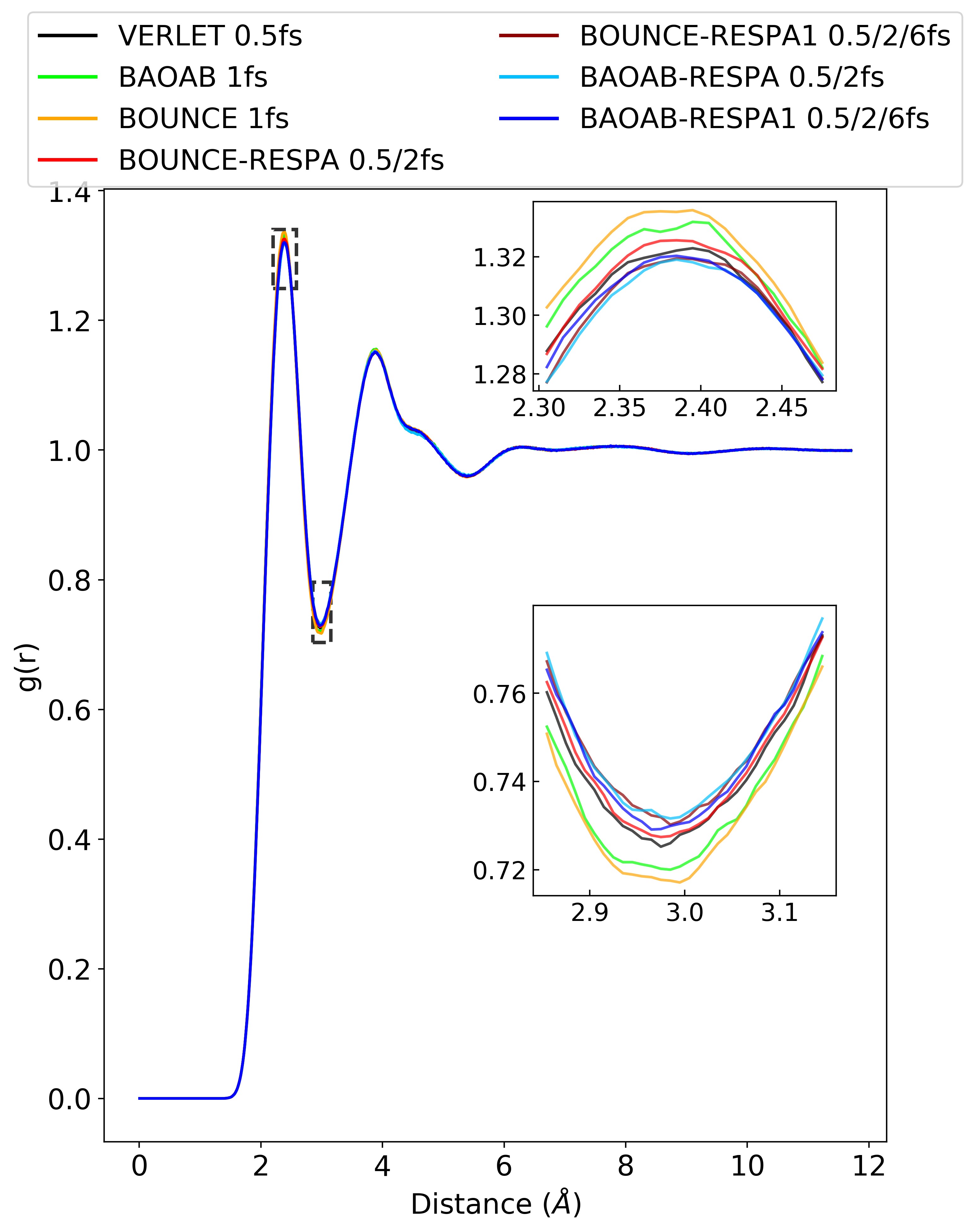}
\caption{Hydrogen-hydrogen radial distribution obtained with a 1500 atom  periodic boundary conditions water box using the SPC model after a 2ns simulation. Comparison of the BOUNCE, BOUNCE-RESPA, BOUNCE-RESPA1 and BAOAB-RESPA1 integrators, with BAOAB as a reference.}\label{Fig-jump-HH}
\end{figure}

\section*{Conclusion and perspectives}
We introduced a new MD integrator, denoted as BOUNCE, in which the slowly varying long-range part of the potential is evaluated through velocity jump processes. By design, the method adapts itself to minimize computational efforts to compute long-range forces while preserving accuracy. In the context of droplet simulations where the non-bonded forces are completely evaluated in direct space, the approach has been shown to have a similar accuracy as a BAOAB integrator with a 1fs timestep while being significantly faster with up to a 4-time acceleration. Significant computational speedup is also obtained in periodic boundary conditions with PME by combining BOUNCE with the RESPA method to evaluate the short-range part of the non bonded forces and reciprocal space contribution less often than the bonded terms: up to a factor 3.2 compared to BAOAB with a 1fs timestep. Furthermore, the use of BOUNCE always computationally outperforms the corresponding BAOAB-RESPA approaches while providing similar accuracy on all static and dynamical computed properties compared to the Verlet or BAOAB reference. Note that in this PBC context, the native basic BOUNCE integrator alone already provides enough direct space acceleration to be computationally competitive with the traditional BAOAB-RESPA (0.5/2fs) integrator. The purpose of the present work has essentially been to present the new algorithm and for this reason, we did not yet applied all acceleration strategies that could be used in conjunction with BOUNCE. For example, as the BOUNCE algorithm strongly minimizes the computation of electrostatics and van der Waals terms in direct space, the choice of any direct summation technique or continuum solvation procedure will strongly benefit from this new integrator. The BOUNCE advantage can also be pushed further in periodic boundary conditions as one could choose to modify the Ewald parameter to increase the computational effort put in direct space compared to reciprocal space (at a fixed Ewald convergence) to benefit more from the BOUNCE efficiency gain as shown in table \ref{table-speedup-ubipbc2}.  Overall, the computational gains obtained with BOUNCE depends on its implementation and on the proportion of the cost of the slowly varying gradient evaluation (which is the one reduced by the algorithm) compared to the rest (neighbor list routines for example).  It is important to point out that the present implementation is only prototypical and the BOUNCE algorithmic structure will provide more room for code optimization\cite{jolly2019raising} whereas further evolutions of the way to treat the many-body reciprocal space interactions are under investigation. Finally, this new technology offers new promises for speeding up simulations using polarizable force fields \cite{melcr2019accurate,pushing,reviewcompchem,annurev-biophys-070317-033349} that include more advanced and therefore more computationally challenging electrostatics, induction and van der Waals contributions that will not need to be computed at each timestep anymore. Incoming works will address all these points in a near future.  
\section{Supplementary Material}
See the Appendix in the supplementary material for details of the simulations and additional details discussed in the text. 
\section{Data availability statement}
The data that support the findings of this study are available from the corresponding author upon reasonable request.
\section{Acknowledgements}
This work has received funding from the European Research Council (ERC) under the European Union's Horizon 2020 research and innovation program (grant agreement No
810367), project EMC2. Computations have been performed at
GENCI on the Occigen machine (CINES, Montpellier, France) on grant no A0070707671.

\bibliographystyle{unsrt}
\bibliography{biblio_jump}

\end{document}